\title{A tropical geometry approach to BIBO stability } 
\author{B. Bossoto \and M. Mboup \and A. Yger}
\newtheorem{theorem}{Theorem}
\newtheorem{definition}{Definition}
\newtheorem{proposition}{Proposition}
\newtheorem{remark}{Remark}
\newenvironment{proof}[1][Proof]{\textbf{#1.} }{\ \rule{0.5em}{0.5em}}
\def\C{\mathbb C}
\def\R{\mathbb R}
\def\F{\mathbb F}
\def\N{\mathbb N}
\def\Q{\mathbb Q}
\def\P{\mathbb P}
\def\Z{\mathbb Z}
\def\T{\mathbb T}
\def\D{\mathbb D}
\def\bfA{{\boldsymbol A}}
\def\({{\rm (}}
\def\){{\rm )}}
\def\0{\underline{0}}
\begin{document}
\newpage
\maketitle
\begin{abstract}
Given a Laurent polynomial $F\in \C[X_1^{\pm 1},...,X_n^{\pm 1}]$, its {\it am\oe ba} $\mathcal A_F$ is the image by $z=(z_1,...,z_n)\in (\C^*)^n 
\longmapsto (\log |z_1|,...,\log |z_n|)\in \R^n$ of the algebraic zero set 
$V(F)=\{z\in (\C^*)^n\,;\, F(z)=0\}$ of the complex torus $\T^n :=(\C^*)^n$. We relate here the question of the BIBO {\it stability of a multilinear time invariant system} with transfer function $A(X^{\pm 1})/B(X^{\pm 1}) = G(X_1,...,X_n)/F(X_1,...,X_n)$, where $F, G\in \C[X_1,...,X_n]$ are coprime in $\C[X_1,...,Xn]$, with the geometrical study of the am\oe ba $\mathcal A_F$. We formulate very simple criteria for BIBO {\it strong} or {\it weak stability} in terms of the position of $\underline 0=(0,...,0)\in \R^n$ with respect to the am\oe ba $\mathcal A_F$ and suggest an algorithmic procedure in order to test such property when $F\in \Z[X_1,...,X_n]$. Such procedure relies on the concept of {\it lopsidedness approximation of $\mathcal A_F$}, as introduced by K. Purbhoo \cite{Purb08} and completed from the algorithmic point of view in \cite{FMMW19}. 
\end{abstract}

\section{Introduction} 

Let $n\in \N$ and 
$$
\C^{[\Z^n]}:= 
\{(u_k)_{k\in \Z^n} \in \C^{\Z^n}\,;\, u_k =0\ 
{\rm except\ for\ a\ finite\ number\ of}\ k\in \Z^n\}. 
$$
A discrete linear time-invariant system 
$$
(u_k)_{k\in \Z^n} \in \C^{[\Z^n]} \stackrel{S}{\longmapsto} 
\Big( \sum\limits_{\kappa \in \Z^n} h_\kappa u_{k-\kappa}\Big)_{k\in \Z^n} \in \C^{\Z^n} 
$$
is said to be {\it Bounded Input- Bounded Output} (BIBO) {\it stable} if and only if its impulse response $(h_k)_{k\in \Z^n}$ belongs to $\ell^1_\C(\Z^n)$, that is 
\begin{equation}\label{sect1-eq1} 
\sum\limits_{k\in \Z^n} |h_k| < +\infty. 
\end{equation} 
Since $\ell^1_\C(\Z^n) \hookrightarrow \ell^2_\C(\Z^n)$, \eqref{sect1-eq1} implies that $\sum_{k\in \Z^n} |h_k|^2< +\infty$, which corresponds to the fact that the system $S$ is {\it asymptotically stable} (or {\it stationary}). BIBO stability thus constitutes a stronger requirement than just {\it asymptotic stability}.  
\vskip 2mm
\noindent
There exists in the classical literature several criteria for BIBO stability for discrete linear time-invariant systems (see for example  \cite{Huang72, JSh73, AndJ74, Sch76, Bose77, Good77, Bist99, BeDir99}).
\vskip 2mm
\noindent 
In this paper, we are concerned with  discrete multilinear time invariant 
systems which admit as transfer function the rational function 
$$
\frac{B(X_1^{-1},...,X_n^{-1})}{A(X_1^{-1},...,X_n^{-1}} = X^\gamma \frac{G(X_1,...,X_n)}{F(X_1,...,X_n)}\in \C[X_1,...,X_n], 
$$ 
where 
$\gamma \in \Z^n$ and $G,F\in \C[X_1,...,X_n]$ are coprime in $\C[X_1,...,X_n]$, both $F$ and $G$ being also coprime with $X_1\cdots X_n$. They are called {\it discrete $n$-rational filters}. In case the rational function 
$$
z\in \T^n=(\C^*)^n 
\longmapsto G(z)/F(z),
$$
where $\C^* := \C\setminus \{0\}$, is regular about the $n$-dimensional torus 
$$
\mathbb S_1^n = \{z=(e^{i\theta_1},...,e^{i\theta_n})\,;\, \theta \in (\R/(2\pi\Z))^n\},
$$ 
then the two following assertions are equivalent for such a discrete $n$-rational filter $S$~: 
\begin{equation*} 
\begin{split} 
&(i)\ S \ {\rm is\ BIBO\ stable} \\
&(ii)\  
F^{-1} (\{0\}) \cap \{z\in \C^n\,;\, |z_j|\leq 1\ {\rm for}\ j=1,...,n\} = \emptyset.  
\end{split} 
\end{equation*} 
Such an equivalence is known in the case where $n=1$ or $n=2$ as {\it Shanks criterion} \cite{JSh73}. 
In the case where $n=2$, one may formulate various criteria equivalent to this one. Here are two examples. 

\begin{theorem}[\cite{Huang72}]\label{sect1-thm1}   
Under the condition that $G/F$ is regular 
about $\mathbb S_1^n$, it is equivalent to say that the discrete  
$2$-rational filter $S$ with transfer function 
$$
\frac{B(X_1^{-1},X_2^{-1})}{A(X_1^{-1},X_2^{-1})} = X^\gamma 
\frac{G(X_1,X_2)}{F(X_1,X_2)}
$$
is 
BIBO stable and that 
\begin{equation*} 
F(z_1,0) \not=0 \quad {\rm for}\ |z_1|\leq 1,\quad 
F(z_1,z_2) \not=0\quad {\rm for}\, |z_1|=1, |z_2|\leq 1. 
\end{equation*}    
\end{theorem}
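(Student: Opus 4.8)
The plan is to reduce everything to the Shanks criterion, i.e. to the equivalence $(i)\Leftrightarrow(ii)$ recalled just before the statement, which for $n=2$ reads: under the standing regularity hypothesis, $S$ is BIBO stable if and only if $F(z_1,z_2)\neq 0$ on the closed bidisc $\overline{\D}\times\overline{\D}=\{|z_1|\le 1\}\times\{|z_2|\le 1\}$. It therefore suffices to prove that the two Huang conditions
\[
F(z_1,0)\neq 0\ (|z_1|\le 1),\qquad F(z_1,z_2)\neq 0\ (|z_1|=1,\,|z_2|\le 1)
\]
together are \emph{equivalent} to this non-vanishing of $F$ on $\overline{\D}\times\overline{\D}$. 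One implication is immediate: if $F$ does not vanish on the whole closed bidisc, then a fortiori it does not vanish on the two sub-loci $\{|z_1|\le 1\}\times\{0\}$ and $\{|z_1|=1\}\times\{|z_2|\le 1\}$, so both Huang conditions hold.

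For the converse --- the substantial direction --- I would argue by the argument principle applied in the variable $z_1$, treating $z_2$ as a parameter. For $z_2\in\overline{\D}$ set
\[
h(z_2)=\frac{1}{2\pi i}\oint_{|z_1|=1}\frac{\partial_{z_1}F(z_1,z_2)}{F(z_1,z_2)}\,dz_1,
\]
the number, with multiplicity, of zeros of the one-variable polynomial $z_1\mapsto F(z_1,z_2)$ in the open disc $\{|z_1|<1\}$. The key point is that the \emph{second} Huang condition guarantees $F(z_1,z_2)\neq 0$ whenever $|z_1|=1$ and $|z_2|\le 1$; hence the integrand never blows up on the contour and $h$ is well defined for \emph{every} $z_2\in\overline{\D}$, not merely on the unit circle as the bare regularity hypothesis would provide. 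Since $F$ is continuous and bounded away from $0$ on the compact set $\{|z_1|=1\}\times\overline{\D}$, the function $h$ is continuous on the connected set $\overline{\D}$; being integer-valued, it is therefore constant.

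It then remains to evaluate this constant. Specializing to $z_2=0$, the \emph{first} Huang condition says $F(z_1,0)\neq 0$ for all $|z_1|\le 1$, so $z_1\mapsto F(z_1,0)$ has no zero in $\{|z_1|<1\}$ and $h(0)=0$. Consequently $h\equiv 0$ on $\overline{\D}$, i.e. for every $z_2$ with $|z_2|\le 1$ the polynomial $F(\cdot,z_2)$ has no zero in $\{|z_1|<1\}$; combined again with the second Huang condition (which covers the case $|z_1|=1$), this yields $F(z_1,z_2)\neq 0$ on all of $\overline{\D}\times\overline{\D}$. Feeding this back into the Shanks equivalence closes the argument.

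I expect the delicate step to be the justification that $h$ is a genuine, well-defined and continuous zero-count on the whole closed disc $\overline{\D}$. Two small points must be checked: that $F(\cdot,z_2)$ is never the zero polynomial for $|z_2|\le 1$ (otherwise the contour integral is meaningless), and that a possible drop of the $z_1$-degree of $F(\cdot,z_2)$ at special values of $z_2$ does not corrupt the count. Both are controlled by the hypotheses: the second Huang condition forbids $F(\cdot,z_2)\equiv 0$ for $|z_2|\le 1$, while the coprimality of $F$ with $X_1X_2$ rules out degenerate behaviour along the coordinate axes, and any roots escaping to infinity as a leading $z_1$-coefficient vanishes simply leave the disc and do not affect the count in $\{|z_1|<1\}$. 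It is worth stressing that the asymmetry of the two Huang conditions is essential: condition~2, imposed over the full closed $z_2$-disc, is exactly what makes $h$ defined on all of $\overline{\D}$, whereas condition~1 is used only to pin the constant value to $0$.
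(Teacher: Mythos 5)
The paper itself offers no proof of this statement: it is quoted verbatim from Huang's 1972 paper as a known criterion, so there is no internal argument to compare yours against. Judged on its own, your proof is correct and is in fact the classical route to Huang's theorem: reduce BIBO stability to non-vanishing of $F$ on $\overline{\D}\times\overline{\D}$ via the Shanks equivalence recalled just before the statement, then show the two Huang conditions are equivalent to that non-vanishing by the argument principle in $z_1$ with $z_2$ as a parameter. The key mechanism --- the second condition makes the zero-count $h(z_2)$ well defined and continuous on all of $\overline{\D}$, hence constant, while the first condition pins the constant to $0$ --- is exactly right, and you correctly identify and dispose of the two technical hazards: $F(\cdot,z_2)$ cannot be the zero polynomial when $|z_2|\le 1$ (condition 2 forbids it on $|z_1|=1$), and a drop of the $z_1$-degree at special $z_2$ sends roots to infinity without affecting the count inside the unit disc. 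One could add that continuity of $h$ follows from the uniform lower bound $\inf\{|F(z_1,z_2)| : |z_1|=1,\ |z_2|\le 1\}>0$ obtained by compactness, but you state this. The only cosmetic remark is that the standing regularity hypothesis ($F\neq 0$ on the torus) is subsumed by condition 2, so it is only needed to make the Shanks equivalence available in the first place; your argument uses it in exactly that way.
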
    

\begin{theorem}[\cite{BeDir99}] Suppose that $\deg_{X_1} F=d_1$ and $\deg_{X_2} F=d_2$. Let $F^*$ be the conjugate polynomial such that 
$$
F^*(z_1,z_2) := z_1^{d_1} z_2^{d_2} \, \overline{F\Big( 
\frac{1}{\bar z_1},\frac{1}{\bar z_2}\Big)}\quad ((z_1,z_2)\in \T^2)  
$$
and $R_{X_2}(X_1)$ be the resultant of $F$ and $F^*$ considered as elements of $\C[X_1] [X_2]$. Under the condition 
that $G/F$ is regular in $(\mathbb S^1)^n$, 
it is equivalent to say that the discrete $2$-rational 
filter $S$ with transfer function 
$$
\frac{B(X_1^{-1},X_2^{-1})}{A(X_1^{-1},X_2^{-1})} = X^\gamma 
\frac{G(X_1,X_2)}{F(X_1,X_2)}
$$
is 
BIBO stable and that 
\begin{eqnarray*} 
F(z_1,0) &\not=0& \quad    {\rm for}\ |z_1|\leq 1 \\ 
F(1,z_2) &\not=0& \quad {\rm for}\, |z_2|\leq 1\\
R_{X_2}(z_1) &\not =0 &\quad {\rm for}\, |z_1|=1.   
\end{eqnarray*}      
\end{theorem}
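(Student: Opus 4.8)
The plan is to reduce the statement to Huang's criterion (Theorem~\ref{sect1-thm1}) and then to study the univariate polynomial $F(z_1,\cdot)$ as $z_1$ runs over the unit circle. First I would write $F(X_1,X_2)=\sum_{0\le j\le d_1,\,0\le k\le d_2}a_{jk}X_1^jX_2^k$, so that, by its defining relation on $\T^2$, the conjugate polynomial is the genuine polynomial $F^*(X_1,X_2)=\sum_{j,k}\bar a_{jk}X_1^{d_1-j}X_2^{d_2-k}$. A direct computation then gives, for $|z_1|=1$ and $\zeta\neq0$, the reflection identity $F^*(z_1,\zeta)=z_1^{d_1}\zeta^{d_2}\,\overline{F(z_1,1/\bar\zeta)}$, so that the nonzero roots of $F^*(z_1,\cdot)$ are exactly the reflections $1/\bar\zeta$ across $\{|X_2|=1\}$ of the nonzero roots $\zeta$ of $F(z_1,\cdot)$. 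By Theorem~\ref{sect1-thm1} the filter $S$ is BIBO stable if and only if both
\begin{equation*}
(a)\ F(z_1,0)\neq0\ \text{for}\ |z_1|\le1,\qquad (ii)\ F(z_1,z_2)\neq0\ \text{for}\ |z_1|=1,\ |z_2|\le1
\end{equation*}
hold. Since $(a)$ is literally the first condition of the present statement, and since when $(a)$ fails both BIBO stability and the conjunction of the three conditions fail (the former again by Theorem~\ref{sect1-thm1}), it only remains to prove, under the standing assumption $(a)$, that $(ii)$ is equivalent to the conjunction of
\begin{equation*}
(b)\ F(1,z_2)\neq0\ \text{for}\ |z_2|\le1,\qquad (c)\ R_{X_2}(z_1)\neq0\ \text{for}\ |z_1|=1 .
\end{equation*}

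Next I would record the role of $(a)$ in controlling the resultant. The $X_2$-leading coefficient of $F^*(z_1,\cdot)$ equals, for $|z_1|=1$, the quantity $z_1^{d_1}\,\overline{F(z_1,0)}$, which is nonzero by $(a)$; hence $F^*(z_1,\cdot)$ keeps exact degree $d_2$ on the circle, and the specialisation $R_{X_2}(z_1)=\mathrm{Res}_{X_2}\bigl(F(z_1,\cdot),F^*(z_1,\cdot)\bigr)$ vanishes if and only if $F(z_1,\cdot)$ and $F^*(z_1,\cdot)$ share a common root, which is necessarily nonzero, again by $(a)$. By the reflection identity such a common root $w$ produces simultaneously $F(z_1,w)=0$ and $F(z_1,1/\bar w)=0$, and at least one of $w,1/\bar w$ lies in $\{|X_2|\le1\}$. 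This already yields $(ii)\Rightarrow(c)$: if $R_{X_2}(z_1^0)=0$ for some $|z_1^0|=1$ we would obtain a zero of $F(z_1^0,\cdot)$ in the closed disk, against $(ii)$. The implication $(ii)\Rightarrow(b)$ is immediate, $(b)$ being the restriction of $(ii)$ to $z_1=1$.

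For the converse $(b)\wedge(c)\Rightarrow(ii)$ I would argue by continuity of root counts. Condition $(c)$ together with the reflection identity forbids $F(z_1,\cdot)$ from having any root on $\{|X_2|=1\}$ when $|z_1|=1$, since a unimodular root is its own reflection and hence a common root of $F$ and $F^*$. Therefore the integer
\begin{equation*}
N(z_1):=\frac{1}{2\pi i}\oint_{|X_2|=1}\frac{\partial_{X_2}F(z_1,X_2)}{F(z_1,X_2)}\,dX_2,
\end{equation*}
counting with multiplicity the zeros of $F(z_1,\cdot)$ inside the open unit disk, is well defined for every $z_1$ with $|z_1|=1$; since $F$ is continuous and nowhere zero on the compact set $\{|z_1|=1\}\times\{|X_2|=1\}$, $N$ is continuous, hence constant on the connected circle $\{|z_1|=1\}$. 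Condition $(b)$ forces $N(1)=0$, whence $N\equiv0$, so $F(z_1,\cdot)$ has no zero in the open disk; combined with the absence of zeros on $\{|X_2|=1\}$ this is exactly $(ii)$.

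I expect the delicate points to be the bookkeeping at the degeneracies: controlling how $R_{X_2}(z_1)$ relates to $\mathrm{Res}_{X_2}\bigl(F(z_1,\cdot),F^*(z_1,\cdot)\bigr)$ when the $X_2$-leading coefficient of $F$ itself drops at some unimodular $z_1$, which is precisely where the non-vanishing of the $F^*$-leading coefficient furnished by $(a)$ is used, and ruling out the degenerate case $R_{X_2}\equiv0$, that is, a nontrivial common factor of $F$ and $F^*$, which must be excluded from the regularity of $G/F$ about $\mathbb S_1^2$. The argument-principle count, by contrast, is robust: it stays continuous across such degree drops because escaping roots never cross the contour $\{|X_2|=1\}$, so the constancy of $N$ requires no separate control on the degree of $F(z_1,\cdot)$.
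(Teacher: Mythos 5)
The paper does not actually prove this statement: it is quoted from \cite{BeDir99} as background, just as Theorem~\ref{sect1-thm1} is quoted from \cite{Huang72}, so there is no in-paper argument to compare yours against. Taken on its own merits, your derivation of the Benidir--Barret criterion from Huang's criterion is correct and complete in its essentials: the reflection identity $F^*(z_1,\zeta)=z_1^{d_1}\zeta^{d_2}\,\overline{F(z_1,1/\bar\zeta)}$ for $|z_1|=1$ is right; the observation that $(a)$ makes the $X_2$-leading coefficient $z_1^{d_1}\overline{F(z_1,0)}$ of $F^*(z_1,\cdot)$ nonvanishing on the circle is exactly what legitimises the specialisation of the resultant (up to a nonzero power of that leading coefficient) even where $\deg_{X_2}F(z_1,\cdot)$ drops; and the argument-principle count $N(z_1)$, locally constant on the connected circle and pinned to $0$ at $z_1=1$ by $(b)$, correctly upgrades ``no unimodular roots'' to ``no roots in the closed disk''. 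The one inaccuracy is in your closing remark: the degenerate case $R_{X_2}\equiv 0$ is \emph{not} excluded by the regularity of $G/F$ about $\mathbb S_1^2$ (take $F=(2-X_2)(1-2X_2)$, for which $F^*=F$ and $R_{X_2}\equiv 0$ while $F$ has no zero on the torus); fortunately your proof never needs that exclusion, since your contrapositive for $(ii)\Rightarrow(c)$ produces a zero of $F(z_1,\cdot)$ in the closed disk from \emph{any} unimodular $z_1$ with $R_{X_2}(z_1)=0$, so when $R_{X_2}\equiv 0$ both sides of the equivalence simply fail together. You might also note explicitly that $F(z_1,\cdot)\not\equiv 0$ under $(a)$, so that ``common root'' and the integral $N(z_1)$ are meaningful, but that is a one-line remark.
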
 
\vskip 2mm
\noindent 
Such criteria consist in formulating the BIBO stability condition 
in the two dimensional setting in such a way one it can be tested 
thanks to one dimensional tests of the Schur-Cohn type which are well known. Tests and algorithms issued from them need a lot of computations.
As an example, consider the Bose test \cite{Bose77}. It consists in reducing the test proposed in Theorem \ref{sect1-thm1}  
to four one-dimensional tests, thus inducing a heavy computational machinery.
\vskip 2mm
\noindent 
In the higher multidimensional case, alternative methods have been introduced by M. Najim, I. Serban and F. Turcu. Such methods are based on the introduction of so-called {\it Schur coefficients families} in several variables, the goal being to obtain a multidimensional Schur-Cohn criterion (see \cite{ANRT03, NST06, NS07A, NS07B, NS07C}).
Let 
$$
F(X_1,...,X_n) = \sum\limits_{\alpha \in {\rm Supp}\, F\subset \Z^n} 
c_\alpha \, X^\alpha\quad (c_\alpha \in \C^*)
$$
with total degree $\delta_F$ in the $n$ variables $X_1,...,X_n$. 
For any $w=(e^{i\omega_1},...,e^{i\omega_{n-1}})$ with 
$\omega = (\omega_1,...,\omega_{n-1}) \in (\R/(2\pi\Z))^{n-1}$, 
define $F_w\in \C[Y]$ by
$$
F_w(Y) = F(w_1 X_1,...,w_{n-1} X_{n-1}, Y) = F(e^{i\omega_1} X_1,...,e^{i\omega_{n-1}} X_{n-1},Y) \in \C[Y].  
$$
Let also $F_w^*(Y)$ be the polynomial defined by 
$$
F_w^*(u) = u^{\delta_F} \overline{F_w (1/\bar u)}. 
$$
In the sequel, the open (respectively closed) unit disc of the complex plane is de\-noted $\D = \{z
\in \C\ ; |z| < 1\}$ (resp. $\overline{\D} = \{z
\in \C\ ; |z| \leq 1\}$).

The main result they obtain is that the three following statements are equivalent:   
\begin{itemize}
\item $F(z)\not=0$ for any $z\in \overline \D^n$ ;
\item for any $w=(e^{i\omega_1},...,e^{i\omega_{n-1}})$, $F_w(u)\not=0$ for any $u\in \overline \D$; 
\item for any $w=(e^{i\omega_1},...,e^{i\omega_{n-1}})$, the function 
$u\in \D \mapsto F_w^*(u)/F_w(u)$ is an inner function in the Hardy space 
$H^2(\D)$ and the $\delta_F$ Schur parametrized (hence called functional) coefficients $w\mapsto \gamma_k(w)$, $k=0,...,\delta_F-1$ satisfy 
$|\gamma_k(w)|<1$ for any $w=(e^{i\omega_1},...,e^{i\omega_{n-1}})$.       
\end{itemize} 
Such an equivalence allows to transpose the $n$-dimensional problem to
\newline 
the $(n-1)$-dimensional setting. Such an approach is efficient for small values of $n$ ($n=2,3$,...) but difficult to implement in higher dimensions.   
\vskip 2mm
\noindent 
Therefore, the problem of testing the BIBO stability of a discrete $n$-rational filter appeals to investigate new strategies leading to criteria easier to implement in high dimensions from the computational point of view. Also, being able to decide whether the denominator $F$ of the transfer function vanishes on $\{(e^{i\theta_1},...,e^{i\theta_n})\,;\, 
\theta \in (\R/(2\pi \Z))^n\}$ seems to be an important challenge, since such an hypothesis is required prior to the formulation of any criterion for BIBO stability.     
\vskip 2mm
\noindent 
We intend in this paper to introduce a novel approach, based on the notion of am\oe ba of an algebraic hypersuface in $\T^n=(\C^*)^n$.    
The notion of amoeba was introduced by Gelfand, Krapanov and Zelevinsky in 1994 in their pionneer book on multidimensional determinants \cite{GKZ}.
The am\oe ba $\mathcal A_F$ of $F$ (one should better say of the zero set $F^{-1}(\{0\})$ of $F$ in $\T^n$) when $F$ is a Laurent polynomial in $n$ variables (in particular a polynomial in $n$ variables) is the image of $F^{-1}(\{0\})$ under the logarithmic map ${\rm Log}~: z 
\mapsto (\log |z_1|,...,\log |z_n|)$. 
Section \textsection 2 provides an overview of what is actually known about such concept, in view of the role it could play in relation with  BIBO stability. In section \textsection 3, we will formulate a criterion within the frame of am\oe ba and enlarge the notion of BIBO {\it stability} into that of {\it weak} BIBO {\it stability}. In section 
\textsection 4, we will analyze our approach from the algorithmic point of view.    

\section{Am\oe ba and related concepts, an overview} 

Let 
\begin{equation}\label{sect2-eq0} 
F = \sum\limits_{\alpha \in {\rm Supp}\, F \subset \Z^n}
c_\alpha X^\alpha \in \C[X_1^{\pm 1},...,X_n^{\pm 1}] \quad (c_\alpha \in \C^*,\ X^\alpha := X_1^{\alpha_1} \cdots X_n^{\alpha_n}) 
\end{equation} 
be a Laurent polynomial with zero set $F^{-1}(\{0\})$ in 
the complex torus $\T^n$. Let $\Delta (F)$ be the Newton polyhedron 
of $F$, that is the closed convex envelope in $\R^n$ 
of the set ${\rm Supp}\, F := \{\alpha \in \Z^n\,;\, c_\alpha \not=0\}$.     
We will always suppose that $F$ is a {\it true} Laurent polynomial in 
$n$ variables, which means that $\dim_{\R^n} \Delta(F)$, that is the dimension of the affine subspace of $\R^n$ generated by $\Delta (F)$ is maximal, that is equal to $n$.  
\vskip 2mm
\noindent 
Let $\mathcal A_F$ be the am\oe ba of $F$, that is the closed image 
${\rm Log} (F^{-1}(\{0\})\subset \R^n$ of the map 
$$
z\in F^{-1}(\{0\}) \subset \T^n 
\stackrel{\rm Log}{\longmapsto} (\log |z_1|,...,\log |z_n|) \in \R^n. 
$$
From the geometric point of view, the complement $\R^n \setminus \mathcal A_F$ is a $1$-convex open subset of $\R^n$, which amounts to say that its open connected components $E_\iota$ are convex. The number of such open connected components is bounded by the number of points in $\Delta (F)\cap \Z^n$ \cite{FPT00}. Let $\mathscr E_F$ be the finite set which elements are such components $E$.  
Each ${\rm Log}^{-1}(E)$, where $E\in \mathscr E_F$, is a Reinhardt domain in $\T^n$, that is a subdomain which is invariant under the pointwise multiplicative action of the real torus $\T_\R^n = \{(e^{i\theta_1},...,e^{i\theta_n})\,;\, 
\theta \in (\R/(2\pi\Z)^n\}$). Moreover it can be described as the maximal domain of convergence of a unique Laurent series $\sum_{k\geq 0} \gamma_{E,k} z^{\alpha_{E,k}}$, $\alpha_{E,k}\in \Z^n$, which sum represents $z\mapsto 1/F(z)$ in ${\rm Log}^{-1}(E)$. A key point is that {\it there is in fact a bijection between the finite set $\mathscr E_F$ and the family of all possible Laurent developments (with domains of convergence precisely ${\rm Log}^{-1}(E)$ for $E\in \mathcal E_F$) for $1/F$ along the monomials $z^\alpha$ for 
$\alpha \in \Z^n$} \cite{FPT00}. 

\begin{remark}[the case $n=1$]\label{sect2-rem1}{\rm In the case where $n=1$, the am\oe ba $\mathcal A_F$ consists in a finite number of points $-\infty< \log |a_1| < \cdots < \log |a_N| < +\infty$ ($a_j\in \C^*$) on the real line and each of the $N+2$ circular domains  
$$
\{z\in \C^*\,;\, |z|<|a_1|\},..., 
\{z\in \C^*\,;\, |a_j|<|z|<|a_{j+1}|\},...,\{z\in \C^*\,;\, |z|> |a_N|\}, 
$$
is the domain of convergence of a Laurent series which sum represents
$1/F$ in the corresponding domain. The domain $C=\{z\in \C^*\,;\, |z|> |a_N|\}$ is, among such list, the only one for which the associated Laurent development in $C$ is of the form $1/F(z) = \sum_{k\geq -M} \gamma_{C,k} z^{-k}$ for some $M\in \Z$ and hence the sequence $(\gamma_{C,k})_{k\in \Z}$ can be interpreted as the impulse response of a rational (realizable) discrete $1$-dimensional filter. 
} 
\end{remark} 

\noindent 
One can associate \cite{PRull08} 
to each $E\in \mathscr E_F$ a multiplicity $\nu_E = (\nu_{E,1},...,\nu_{E,n})\in \Delta(F)\cap \Z^n$, where $\nu_{E,j}$ is the degree of the loop 
\begin{equation}\label{sect2-eq1} 
\theta_j \in \Z/(2\pi \Z) 
\longmapsto F(\zeta_{E,1},\dots,\zeta_{E,j} e^{i\theta_j},\dots,\zeta_{E,n}) 
\end{equation} 
when $(\zeta_{E,1},...,\zeta_{E,n})$ is an arbitrary point in $E$, the degree of the loop \eqref{sect2-eq1} being independent on the choice of such point $\zeta_E$ in $E$. 
\vskip 2mm
\noindent 
For each point $\alpha \in \Delta(F) \cap \Z^n$, there is {\it at most} 
one component $E_\alpha\in \mathscr E_F$ such that $\nu_{E_\alpha} = \{\alpha\}$. 
If such is the case, let $\sigma_\alpha$ be the unique face of $\Delta(F)$ which contains $\alpha$ in its relative interior or (if no such face exists) equals $\{\alpha\}$~: for example, if $\alpha$ is a vertex of $\Delta(F)$, $\sigma_\alpha = \{\alpha\}$, while when $\alpha$ lies in the interior of $\Delta(F)$, $\sigma_\alpha = \Delta(F)$, etc. Then the cone 
$$
\Gamma_\alpha= \big\{x\in \R^n\,;\, \sigma_\alpha 
= \{\xi \in \Delta (F)\,;\, 
\langle \xi,x\rangle =\max_{u\in \Delta(F)} \langle u,x\rangle\}\big\}
$$
is the {\it recession cone} of $E_\alpha$, that is the largest cone $\Gamma$ of $\R^n$ such that $E_\alpha + \Gamma \subset E_\alpha$.  
Such recession cone equals $\{\0\}$ whenever $\alpha$ lies in the interior of $\Delta(F)$, hence the corresponding component $E_\alpha$ is, if it exists, bounded in this case. When $\alpha$ belongs to the boundary of 
$\Delta(F)$, the dimension of the recession cone is maximal (thus equal to $n$) if and only if $\alpha$ is a vertex of $\Delta(F)$. If $\alpha$ is a point of $\partial \Delta(F)\cap \Z^n$ which is not a vertex of 
$\Delta(F)$, then, if $E_\alpha$ exists, it is unbounded and its recession cone has dimension between $1$ and $n-1$. A major point is that {\it any vertex $\alpha$ of $\Delta(F)$ is the multiplicity $\nu_{E_\alpha}$ of a unique unbounded component $E_\alpha$ which admits as recession cone the cone 
\begin{equation}\label{sect2-eq1bis} 
\Gamma_\alpha= \big\{x\in \R^n\,;\, \{\alpha\}  
= \{\xi \in \Delta (F)\,;\, 
\langle \xi,x\rangle =\max_{u\in \Delta(F)} \langle u,x\rangle\}\big\}. 
\end{equation}
}    
Thus the cardinal of $\mathscr E_F$ lies between the number or vertices of $\Delta(F)$ and the cardinal of $\Delta(F)\cap \Z^n$; the number of bounded components in $\mathcal E_F$ (called the {\it genus} of the am\oe ba) lies between $0$ and the number of points in $\mathring \Delta(F) 
\cap \Z^n$ \cite{FPT00}.  
\vskip 2mm
\noindent 
An important concept related to the am\oe ba $\mathcal A_F$ is that of 
{\it contour}. Let $(F^{-1}(\{0\}))_{\rm sing}$ be the subvariety of singular points of the algebraic hypersurface $F^{-1}(\{0\})$~: if 
$F$ is assumed to be irreducible in $\C[X_1^{\pm 1},...,X_n^{\pm 1}]$, 
$(F^{-1}(\{0\}))_{\rm sing}$ is defined 
as 
$$
(F^{-1}(\{0\}))_{\rm sing} = 
\big\{z\in \T^n\,;\, 
F(z) = \frac{\partial F}{\partial z_1}(z) = \cdots = 
\frac{\partial F}{\partial z_n}(z)=0\big\} ; 
$$
if $F=F_1^{q_1}\cdots F_M^{q_M}$ is the decomposition of $F$ in irreducible factors in $\C[X_1^{\pm 1},...,X_n^{\pm 1}]$ 
$$
(F^{-1}(\{0\}))_{\rm sing} = 
\bigcup\limits_{j=1}^M (F_j^{-1}(\{0\}))_{\rm sing} 
\cup \bigcup\limits_{\stackrel{1\leq j,k\leq M}{j\not=k}}
(F_j^{-1}(\{0\}) \cap F_k^{-1}(\{0\})). 
$$
In any case, the codimension of $(F^{-1}(\{0\}))_{\rm sing} 
\subset F^{-1}(\{0\})$ in $\T^n$ is at least equal to $2$. 
One denotes as $(F^{-1}(\{0\}))_{\rm reg}$ the (in general non closed) submanifold of $\T^n$ defined as $F^{-1}(\{0\}) \setminus (F^{-1}(\{0\}))_{\rm sing}$.   

\begin{definition}[contour of $\mathcal A_F$ \cite{PT05}] The contour of the am\oe ba $\mathcal A_F$ is the union of ${\rm Log} (F^{-1}(\{0\}))_{\rm sing}$ with the set of critical values of 
${\rm Log}~: (F^{-1}(\{0\}))_{\rm reg} \rightarrow \R^n$.
\end{definition} 

\noindent
The contour of $\mathcal A_F$ contains necessarily the boundary of 
$\mathcal A_F$. A major result concerning the description of the contour of $\mathcal A_F$ is the following result due to G. Mikhalkin \cite{Mik00,PRull08}. 

\begin{theorem}[\cite{Mik00,PRull08}, see also \textsection 4.1 in 
\cite{Y12}]\label{sect2-thm1}  
Suppose that $F$ is a reduced Laurent polynomial, that is $F=F_1\cdots F_M$ where each $F_j$ is irreducible. Let 
\begin{equation}\label{sect2-GaussMap} 
z\in (F^{-1}(\{0\}))_{\rm reg}  
\longmapsto \gamma_F(z) := \Big[ 
z_1 \frac{\partial F}{\partial z_1}(z) : \cdots : 
z_n \frac{\partial F}{\partial z_n}(z)\Big] \in \P^{n-1}(\C)  
\end{equation} 
be the Gauss logarithmic map. One has 
\begin{equation}\label{sect2-eq2} 
{\rm contour}\, (\mathcal A_F) = 
\overline{{\rm Log}\, \big(
\gamma_F^{-1} (\P^{n-1}(\R)\big)},  
\end{equation}  
where $\P^{n-1}(\R)= (\R^n\setminus \{\underline 0\})/\R^*$ denotes the real $(n-1)$-dimensional projective space.  
\end{theorem}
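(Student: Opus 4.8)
```latex
\begin{proof}[Proof sketch]
The plan is to characterize the contour as the set of points in $\R^n$ where the logarithmic map ${\rm Log}$ fails to be a local diffeomorphism onto its image, and to show that this failure is detected precisely by the reality of the logarithmic Gauss map $\gamma_F$. The starting point is the local structure of $(F^{-1}(\{0\}))_{\rm reg}$: near any point $z^0$ where not all $z_j \partial F/\partial z_j(z^0)$ vanish, $F^{-1}(\{0\})$ is a smooth complex hypersurface, and after passing to logarithmic coordinates $w_j = \log z_j$ (defined locally, since we work in $\T^n$), the hypersurface is cut out by $\tilde F(w) := F(e^{w_1},\dots,e^{w_n}) = 0$ with gradient $\partial \tilde F/\partial w_j = z_j \partial F/\partial z_j$. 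Thus the homogeneous coordinates of $\gamma_F(z)$ are exactly the components of the gradient of $\tilde F$ in logarithmic coordinates.

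The central computation is to analyze the real Jacobian of the composite map ${\rm Log} = {\rm Re} \circ \iota$, where $\iota$ embeds the complex hypersurface into $\C^n$ with coordinates $w$, and ${\rm Re}$ takes real parts. Parametrizing the hypersurface locally by holomorphic coordinates and writing $w_j = x_j + i y_j$, the image ${\rm Log}$ records the $x = (x_1,\dots,x_n)$ part. First I would show that the differential of ${\rm Log}$, restricted to the real tangent space of the hypersurface (a real $(2n-2)$-dimensional space) and landing in $\R^n$, drops rank exactly when the complex tangent space, which is the kernel of the linear form $\sum_j (\partial \tilde F/\partial w_j)\, dw_j$, contains a nonzero real vector modulo the complex direction. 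Concretely, a critical point of ${\rm Log}$ occurs iff there is a nonzero real vector $v \in \R^n$ and $\lambda \in \C$ with $v$ annihilating the gradient after scaling by $\lambda$; unwinding this, the obstruction is that $(\partial \tilde F/\partial w_1(z):\cdots:\partial \tilde F/\partial w_n(z))$ lies in $\P^{n-1}(\R)$, i.e. that $\gamma_F(z) \in \P^{n-1}(\R)$. This is the linear-algebra heart of the argument.

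Once the pointwise characterization of critical values over the regular locus is established, I would assemble the three pieces. The set of critical \emph{points} in $(F^{-1}(\{0\}))_{\rm reg}$ is $\gamma_F^{-1}(\P^{n-1}(\R))$ by the above, so its image under ${\rm Log}$ gives the critical \emph{values} coming from the regular part. Adding ${\rm Log}(F^{-1}(\{0\}))_{\rm sing}$, which is part of the contour by definition, and passing to the closure to account for limiting behaviour near the singular locus and at the boundary of the regular locus, yields the asserted identity \eqref{sect2-eq2}. The reduction to $F$ reduced (squarefree) guarantees that the regular locus is dense and that $\gamma_F$ is well defined on a dense open set, so the closure operation captures everything.

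The main obstacle will be the rank computation in the second paragraph: one must carefully separate the complex-linear and real-linear structures, since ${\rm Log}$ is only a real map on a complex manifold, and correctly identify when the real derivative degenerates. The subtlety is that the kernel of $d\,{\rm Log}$ is governed by the interplay between the holomorphic tangent directions and the conjugation involved in taking real parts; the clean statement that degeneracy is equivalent to reality of $\gamma_F$ requires tracking how multiplication by $i$ acts on tangent vectors and verifying that a real direction in the image of the Gauss map produces, and is produced by, a genuine kernel vector. I would handle the boundary contribution last, noting that $\partial \mathcal A_F \subset {\rm contour}(\mathcal A_F)$ holds automatically and that points of the boundary arise as limits of critical values, so taking the closure on the right-hand side of \eqref{sect2-eq2} is both necessary and sufficient.
\end{proof}
```
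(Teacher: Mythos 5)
The paper does not prove this theorem: it is quoted from Mikhalkin and Passare--Rullg{\aa}rd (see also \textsection 4.1 of \cite{Y12}), so there is no internal proof to compare your attempt against. Judged on its own terms, your sketch follows the standard route, and its core --- the linear-algebra characterization of the critical points of ${\rm Log}$ on the regular locus --- is the right one. To make that step precise: in logarithmic coordinates the tangent space at a regular point is the complex hyperplane $T=\ker\big(\sum_j a_j\,dw_j\big)$ with $a_j=z_j\partial F/\partial z_j(z)$, and ${\rm Re}\colon T\to\R^n$ fails to be surjective iff some nonzero real functional $\xi$ annihilates ${\rm Re}(T)$; since $iT=T$ this forces $\xi$ to annihilate $T$ as a complex functional, hence $\xi=\lambda a$ for some $\lambda\in\C^*$, i.e. $[a_1:\cdots:a_n]\in\P^{n-1}(\R)$, and conversely. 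Equivalently, $T\cap\R^n$ jumps from dimension $n-2$ to $n-1$ exactly when ${\rm Re}\,a$ and ${\rm Im}\,a$ are $\R$-proportional. Your phrasing (``contains a nonzero real vector modulo the complex direction'') gestures at this but does not carry out the dimension count, and that count is the entire content of the pointwise statement; you should write it out.

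The one genuine gap is the treatment of the singular locus. The contour is \emph{defined} as ${\rm Log}\big((F^{-1}(\{0\}))_{\rm sing}\big)$ together with the critical values of ${\rm Log}$ on the regular part, whereas $\gamma_F$ is only defined on the regular part; so the asserted equality requires ${\rm Log}\big((F^{-1}(\{0\}))_{\rm sing}\big)\subset\overline{{\rm Log}\big(\gamma_F^{-1}(\P^{n-1}(\R))\big)}$, i.e. that every singular point (for instance a point of $F_j^{-1}(\{0\})\cap F_k^{-1}(\{0\})$ when $F=F_1\cdots F_M$) is a limit of \emph{regular} points at which the logarithmic Gauss map takes real values. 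You assert this by ``passing to the closure to account for limiting behaviour near the singular locus,'' but give no argument, and it is not automatic: it requires a local analysis near the singular stratum, which is where the cited proofs do genuine geometric work beyond the linear algebra. The reverse inclusion is fine with the compactness argument you indicate: if $z_k\in\gamma_F^{-1}(\P^{n-1}(\R))$ and ${\rm Log}(z_k)$ converges in $\R^n$, the $z_k$ stay in a compact subset of $\T^n$, and any subsequential limit lies on the hypersurface and is either singular (so its image is in the contour by definition) or regular with real Gauss direction by continuity of $\gamma_F$.
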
  

\noindent 
From this theorem, an algebraic algorithm based on elimination theory can be realized in order to compute the contour of $\mathcal A_F$ for bivariate polynomials $F(X_1,X_2) \in \Z[X_1,X_2]$ (see \cite[Algorithm 3]{BKS16}). This algebraic algorithmic procedure is based on the construction we sketch below. Let $u\in \R$ be a real parameter. Consider the two polynomials 
\begin{equation}\label{sect2-eq4}
P(X_1,X_2,u) = F(X_1,X_2),\quad Q(X_1,X_2,u) = \frac{\partial F}{\partial X_1} 
(X_1,X_2) + u \frac{\partial F}{\partial X_2}(X_1,X_2) 
\end{equation}
in $\Z[u][X_1,X_2]$. One can compute formally in an exact way the Sylvester resultant $R_{X_2}(u,X_1)\in \Z[u,X_1]$ of $P$ and $Q$ considered as elements of 
$\Z[u,X_1] [X_2]$ and the Sylvester resultant $R_{X_1}(u,X_2) \in \Z[u,X_2]$ of 
$P$ and $Q$ considered this time as elements of $\Z[u,X_2][X_1]$. See for example \cite{Lang} for the construction of the Sylvester resultants of two polynomials in $\mathbb A[Z]$ where $\mathbb A$ is a commutative domain of integrity with fraction field $\mathbb K$, as $\mathbb A=\Z[u,X_1]$ 
(and $Z=X_2$) or 
$\mathbb A = \Z[u,X_2]$ (and $Z=X_1$) in our case. Theorem \ref{sect2-thm1} implies that $(x_1,x_2)\in {\rm contour}(\mathcal A_F)$ if and only if there is at least one point $u\in \P^1(\R)$ such that $P(X,Y,u)$ and $Q(X,Y,u)$ share a common zero $(z_1,z_2)\in \T^2$ which lies in the orbit ${\rm Log}^{-1}(\{(x_1,x_2)\})$, that is such that $|z_1|= e^{x_1}$ and 
$|z_2| = e^{x_2}$. Elimination theory implies that $(z_1,z_2)$ satisfies
\begin{equation}\label{sect2-eq5} 
R_{X_2} (u,z_1) = R_{X_1}(u,z_2) = 0. 
\end{equation} 
Given $u\in \Q = \P^1(\Q) \setminus \{[1:0]\} = \P^1(\Q) \setminus \{\infty\}$, one can compute exactly thanks to Newton's method the (at most $\deg_{X_1} R_{X_2} \times \deg_{X_2} R_{X_1}$) pairs of algebraic numbers $(z_1,z_2) \in (\overline \Q)^2 \cap \T^2$ which satisfy \eqref{sect2-eq5}.   
Then, one can extract from this list the sublist of pairs of points 
which satisfy 
\begin{equation} \label{sect2-eq6} 
P(z_1,z_2,u) = Q(z_1,z_2,u) = 0,  
\end{equation}    
that is such that $(\log |z_1|,\log |z_2|) \in {\rm contour} (\mathcal A_F)$. Repeating this procedure for $u= -N + 2 \ell N/M $, 
$\ell =0,...,M-1$, where $M >>N >>1$, leads to a construction of ${\rm contour}(\mathcal A_F)$ (see \cite[Algorithm 3]{BKS16} for the numerical code under \texttt{Matlab} or also \cite[\textsection 5]{Oss19} for the formal code under \texttt{Sage}). 
\vskip 2mm
\noindent 
As a consequence of this method, one can state the following proposition. 
\begin{proposition}[is $(0,0)$ in the contour of $\mathcal A_F$~?]\label{sect2-prop1} 
Let $F\in \Z[X_1,X_2]$ be an irreducible polynomial in $2$ variables such that $\dim_{\R^2} \Delta(F)=2$. There is an exact procedure based on Schur-Cohn test to decide whether $(0,0)$ lies (or not) in the contour of $\mathcal A_F$.
\end{proposition}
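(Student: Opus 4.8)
\noindent
The plan is to reduce the geometric membership ``$\underline 0=(0,0)\in{\rm contour}(\mathcal A_F)$'' to a purely algebraic question over the unit torus $\mathbb S_1^2$, and then to settle that question by an exact resultant computation followed by Schur--Cohn tests; here $(0,0)$ is exactly the ${\rm Log}$-image of the fibre $\{z\,;\,|z_1|=|z_2|=1\}=\mathbb S_1^2$. By Theorem \ref{sect2-thm1}, and since the singular locus $(F^{-1}(\{0\}))_{\rm sing}$ of the irreducible curve $F$ is finite, one has $(0,0)\in{\rm contour}(\mathcal A_F)$ if and only if either $(F^{-1}(\{0\}))_{\rm sing}$ meets $\mathbb S_1^2$, or there is a regular point $z\in F^{-1}(\{0\})\cap\mathbb S_1^2$ at which the logarithmic Gauss map $\gamma_F$ of \eqref{sect2-GaussMap} takes a value in $\P^1(\R)$. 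Because $\mathbb S_1^2$ is compact and ${\rm Log}$ is continuous, the image of the closed set $F^{-1}(\{0\})\cap\mathbb S_1^2$ is closed, so the closure in \eqref{sect2-eq2} adds no new point over $(0,0)$ and it suffices to decide whether one of these two attained alternatives occurs.

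First I would algebraise the reality condition, mirroring the elimination scheme \eqref{sect2-eq4}--\eqref{sect2-eq6} restricted to the fibre over $(0,0)$. The condition $\gamma_F(z)\in\P^1(\R)$ amounts to the reality of $u:=-z_1(\partial F/\partial z_1)(z)/\big(z_2(\partial F/\partial z_2)(z)\big)$, that is to the elimination of the projective parameter in \eqref{sect2-GaussMap}. On $\mathbb S_1^2$ one has $\bar z_j=1/z_j$, and, as $F\in\Z[X_1,X_2]$ has real coefficients, $\overline{(\partial F/\partial z_j)(z)}=(\partial F/\partial z_j)(1/z_1,1/z_2)$. Imposing $u=\bar u$ therefore turns the reality condition into the single Laurent-polynomial equation $H(z_1,z_2)=0$, where
\begin{equation*}
H(z_1,z_2)=z_1^2\,\frac{\partial F}{\partial z_1}(z_1,z_2)\,\frac{\partial F}{\partial z_2}\Big(\frac1{z_1},\frac1{z_2}\Big)-z_2^2\,\frac{\partial F}{\partial z_2}(z_1,z_2)\,\frac{\partial F}{\partial z_1}\Big(\frac1{z_1},\frac1{z_2}\Big).
\end{equation*}
After multiplication by a suitable monomial $z_1^{d_1}z_2^{d_2}$ one gets a genuine element of $\Z[X_1,X_2]$, still denoted $H$, such that for $z\in\mathbb S_1^2$ one has $H(z)=0$ if and only if $\gamma_F(z)\in\P^1(\R)$ (the cases where one partial vanishes being captured automatically, since then both summands vanish on $\mathbb S_1^2$). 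The regular alternative thus becomes: does the system $F=H=0$ have a common zero on $\mathbb S_1^2$?

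I would then eliminate one variable and test membership of the unit circle exactly. Form the Sylvester resultant $r(X_1)={\rm Res}_{X_2}(F,H)\in\Z[X_1]$, so that the first coordinate of every common zero of $F$ and $H$ in $\T^2$ is a root of $r$. Apply the Schur--Cohn test to $r$ to decide, \emph{without numerical error}, whether $r$ has roots on the unit circle and to isolate the finitely many algebraic candidates $z_1^0$. For each such $z_1^0$ I would substitute it into $F$ and $H$, compute $D(X_2)=\gcd\big(F(z_1^0,X_2),H(z_1^0,X_2)\big)$ over the number field $\Q(z_1^0)$, and run Schur--Cohn on $D$ to test whether it has a root $z_2^0$ with $|z_2^0|=1$; the existence of such a pair $(z_1^0,z_2^0)$ is precisely the existence of a regular contour point over $(0,0)$. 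In parallel, the singular alternative is decided by the same technique applied to the zero-dimensional system $F=\partial F/\partial z_1=\partial F/\partial z_2=0$. Declaring $(0,0)\in{\rm contour}(\mathcal A_F)$ exactly when one of these tests succeeds yields the announced procedure, every step of which---resultants, $\gcd$'s over number fields, and Schur--Cohn sign sequences on real algebraic numbers---is a finite exact computation.

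The step I expect to be the main obstacle is the degenerate case $r\equiv0$, i.e. $F$ and $H$ sharing a nontrivial factor; as $F$ is irreducible this forces $F\mid H$, meaning the whole curve satisfies the torus-reality relation. One must then supply a fallback---re-running the elimination with the roles of $X_1$ and $X_2$ exchanged, or arguing directly from $\dim_{\R^2}\Delta(F)=2$ and irreducibility that in this situation $(0,0)\in{\rm contour}(\mathcal A_F)$ is equivalent to the finitely testable nonemptiness of $F^{-1}(\{0\})\cap\mathbb S_1^2$. Two further technical points to control are the spurious off-torus common zeros of $F$ and $H$ created by the substitution $\bar z_j=1/z_j$ (harmless, as they are discarded by the Schur--Cohn unit-circle filter) and the possible vanishing of the leading coefficient of $F$ in $X_2$, which must be treated so that $r$ does not lose solutions. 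Once these degeneracies are handled, correctness follows from Theorem \ref{sect2-thm1} together with the exactness of the Schur--Cohn test.
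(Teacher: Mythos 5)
Your argument is correct in substance but takes a genuinely different route from the paper's own proof. The paper keeps the projective parameter $u$ of the pencil $Q=\partial F/\partial X_1+u\,\partial F/\partial X_2$ alive: it forms the two resultants $R_{X_2}(u,X_1)$ and $R_{X_1}(u,X_2)$ of the pair $(P,Q)$ from \eqref{sect2-eq4} and runs a Schur--Cohn test \emph{parametrized by the real variable $u$} to isolate the finitely many $u\in\overline\Q$ for which both resultants can have unit-circle roots, then checks the surviving finite list of candidates against \eqref{sect2-eq6}. You instead eliminate $u$ at the outset by exploiting the special structure of the fibre over $(0,0)$ --- namely $\bar z_j=1/z_j$ on $\mathbb S_1^2$ together with $F\in\Z[X_1,X_2]$ --- which converts the reality condition $\gamma_F(z)\in\P^1(\R)$ into a single auxiliary polynomial $H\in\Z[X_1,X_2]$ and reduces everything to the zero-dimensional question of whether $\{F=H=0\}$ meets $\mathbb S_1^2$, settled by one resultant, gcd computations over a number field, and unparametrized Schur--Cohn tests. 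What your version buys is that the delicate step of the paper (deciding for which real $u$ a polynomial with coefficients in $\Z[u]$ acquires a unit-circle root, which the paper dispatches in one sentence) disappears entirely; what it costs is that it is tailored to the fibre over the origin, which is all the proposition requires, whereas the paper's scheme is the specialization of a contour algorithm valid over an arbitrary base point. Note also that your $H$ is built from the logarithmic Gauss map exactly as in \eqref{sect2-GaussMap}, with the factors $z_1,z_2$ in front of the partial derivatives, whereas $Q$ in \eqref{sect2-eq4} omits them; on the unit torus the two reality conditions are not equivalent, so your formulation is the one actually consistent with Theorem \ref{sect2-thm1}. The degeneracies you flag ($F\mid H$ or $H\equiv 0$, as happens for instance for $F=1+X_1X_2$, vanishing leading coefficients, spurious off-torus common roots) are real and must be handled as you indicate, but they do not undermine the argument, and the paper's proof silently faces analogous degeneracies in its parametrized resultants.
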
 

\begin{proof} Let us form the resultants $R_{X_2}(u,X_1)\in \Z[u][X_1]$, $R_{X_1}(u,X_2) \in \Z[u][X_2]$. Consider $u$ as a real parameter.
The Schur-Cohn test allows to decide for which possible values of the real parameter $u\in \overline \Q$ the polynomials $R_{X_1}(u,X_2)$ and 
$R_{X_2}(u,X_1)$ may both have a root (necessarily in $\overline \Q$) on the unit circle $|\zeta[=1$ of the complex plane. We are then left with a finite number of situations to test in order to decide whether the two polynomials $P(X_1,X_2,u)$ and $Q(X_1,X_2,u)$ defined in \eqref{sect2-eq4} have at least both a root on the unit circle, which means in this case that $(0,0)$ belongs to the contour of $\mathcal A_F$.     
\end{proof}       
\vskip 2mm
\noindent 
Another important concept related to $F$ which provides geometric information on $\mathcal A_F$ is the following convex function 
$$
R_f~: x\in \R^n \longmapsto \int_{\T_\R^n} 
F(e^{x_1 + i\theta_1},...,e^{x_n+i\theta_n})\, d\nu_{\T_\R^n} (\theta), 
$$
where $\T_\R^n = (\R/(2\pi\Z))^n$ equipped with its normalized Haar measure $d\nu_{\T^n_\R}$. It was introduced by L. Ronkin in 
\cite{Ronk00} and is thus called the {\it Ronkin function} of $F$. The three important facts to retain about such function are the following. 
\begin{enumerate} 
\item The function $R_F$ is affine in the connected component (of $\R^n \setminus \mathcal A_F$) $E_\alpha\in \mathcal E_F$ with multiplicity $\alpha \in \Delta(F) \cap \Z^n$, provided of course such component exists \cite{PRull08}. More precisely, 
\begin{equation} 
\label{sect2-eq7}  
\forall x\in E_\alpha,\quad R_F(x) = \rho_\alpha + \langle \alpha,x\rangle. 
\end{equation} 
\item 
When $\alpha$ is a vertex of $\Delta(F)$ (and hence $E_\alpha$ exists,  
with $n$-dimensional recession cone given by \eqref{sect2-eq1bis}), then 
$\rho_\alpha =\log |c_\alpha|$, where $c_\alpha$ is the coefficient of $X^\alpha$ in the developped expression \eqref{sect2-eq0} for $F$.
\item The singular support of the distribution $\Delta ([R_F])$ (where $\Delta$ is the Laplace operator and $[R_F]$ means that $R_F$ is considered in the sense of distributions) is contained in the contour of $\mathcal A_F$ \cite[Theorem 3.1]{Oss19}.  
\end{enumerate}
Although $R_F$ is just a continuous function inside $\mathcal A_F$, one can compute numerically when $n=2$ the Laplacian of the associate distribution $[R_F]$ (see \cite[\textsection 5]{Oss19}). The main reason why such 
a method works is that the singularities of $\log |F|$ on $\T^n$ are gentle ones (the function $\log |F|$ is locally integrable on $\T^n$) and the use of the Laplace operator (because of its symmetric form 
with respect to coordinates) is a basic (primitive) tool for the detection of contours in image processing. Such numerical computation provides (unfortunately in some empiric way) a suprizingly convincing picture both of the am\oe ba and simultaneously of its contour \cite[\textsection 5]{Oss19}.  
\vskip 2mm
\noindent
The convex Ronkin function $R_F$ has a companion $p_F$ 
which is much easier to describe since it is realized in $\R^n$ as the upper envelope of a finite number of affine functions with slopes in $\Z^n$, hence can be interpreted as the evaluation function in $\R^n$ of a {\it tropical Laurent polynomial} since the operations 
$$
(a,b)\in ([-\infty,+\infty[)^2\longmapsto \max (a,b),\quad (a,b) \in ([-\infty,+\infty[)^2
$$
substitute to the usual addition and multiplication in the (tropical) 
{\it max-plus calculus}. Let $\nu~: \mathscr E_F \rightarrow \Delta(F) \cap \Z^n$ the multiplicity map which associates to each $E\in \mathscr E_K$ its multiplicity $\nu_E$ ; then $p_F$ is defined as 
$$
p_F~: x\in \R^n \longmapsto \max\limits_{\alpha \in  {\rm Im}\, \nu} 
(\rho_\alpha + \langle \alpha,x\rangle). 
$$  
One has that $p_F(x)\leq R_F(x)$ for any $x\in \R^n$ and $p_F(x)=R_F(x)$ in $\R^n \setminus \mathcal A_F$. For each $\alpha \in {\rm Im}\, \nu$, let $C_\alpha$ be the $n$-dimensional convex polyhedron (possibly unbounded) of $\R^n$ defined as 
$$
C_\alpha = \{x\in \R^n\,;\, p_F(x) + \check p_F(\alpha) = \langle \alpha,x\rangle\},  
$$
where $p_F~: 
\xi\in \R^n \mapsto \sup_{x\in \R^n} (\langle \xi,x\rangle - p_F(x))\in ]-\infty,+\infty]$ is the Legendre transform of $p_F$, which  satisfies 
$\check p_F^{-1}(\{+\infty\})=\Delta(F)$ \cite{PRull08}. The interiors $\mathring C_\alpha$, $\alpha \in {\rm Im}\, \nu$ are pairwise disjoints, and the the complement of their union equals the set of critical values of $p_F$, that is the subset of points in $\R^n$ about which $p_F$ is not an affine map. Since $p_F$ and $R_F$ coincide on $\R^n \setminus \mathcal A_F$, one has $E_\alpha \subset C_\alpha$ for any $\alpha\in {\rm Im}\, \nu$. Observe then that, given a point $x\in \R^n \setminus \mathcal A_F$, in order to decide to which component $E_\alpha$ it belongs, one needs to check to which $\mathring C_\alpha$ it belongs.  
\vskip 2mm
\noindent  
Let for any $N\in \N^*$ (in particular $N=2^k$, $k\in \N$) $\F_N$ be the multiplicative group of $N$-roots of unity and 
$$
F_N(X) = \prod\limits_{\varpi\in \F_N^n} 
F(\varpi_1 X_1,...,\varpi_N X_N). 
$$
An iterative procedure to compute the $F_{2^k}$ ($k\in \N$) inspired by the Gauss-Cooley-Tukey FFT algorithm has been proposed in \cite[\textsection  3]{FMMW19}.  
Observe that $F$ and $F_N$ share the same am\oe ba $\mathcal A_F$ for all $N\in \N^*$. It follows also from Galois theory that $F_N \in \Z[X_1^{\pm 1},...,X_n^{\pm 1}]$ as soon as $F\in \Z[X_1^{\pm 1},...,X_n^{\pm 1}]$. Since the integral of a continuous function can be approximated by Riemann sums, the Ronkin function $R_F$ admits in $\R^n \setminus \mathcal A_F$ the uniform approximation on any compact subset 
\begin{equation}\label{sect2-eq9} 
R_F (x) = \lim\limits_{N\rightarrow +\infty} 
\frac{R_{F_N}(x)}{N^n}\quad (x\in \R^n \setminus \mathcal A_F). 
\end{equation}   
In order to exploit such idea, K. Purbhoo introduced the non-archimedean concept of (tropically) {\it lopsided am\oe ba of $F$}. We recall here this construction. 
\vskip 1mm
\noindent 
Recall that a finite set $\{\tau_\iota\,;\, \iota \in I\}$ of strictly positive real numbers (with possible repetitions) is said to be (tropically) {\it lopsided} if and only if there is a (necessarily) unique index $\iota_0\in I$ such that 
$$
\tau_{\iota_0} > \sum\limits_{\iota \in I \setminus \{\iota_0\}}
\tau_\iota. 
$$

\begin{definition}[lopsided am\oe ba of $F$ \cite{Purb08}] 
The {\rm lopsided am\oe ba} $\mathscr L_F$ of the Laurent polynomial $F(X)=\sum_{\alpha \in {\rm Supp} F\subset \Z^n} 
c_\alpha X^\alpha$ is the image by ${\rm Log}$ of the subset of $\T^n$ which consists in the set of points $z=(z_1,...,z_n)\in \T^n$ where the 
set $\{|c_\alpha|\, |z^\alpha|\,;\, \alpha \in {\rm Supp} F\}$ of strictly positive numbers is \underline{not} lopsided.    
\end{definition} 

\noindent One has necessarily that $\mathcal A_F\subset \mathcal L_F$ since if $x\in \mathscr A_F$, it is clearly impossible for the set 
$\{|c_\alpha| e^{\langle \alpha,x\rangle}\,;\, \alpha \in {\rm Supp}\, F\}$ to be lopsided. K. Purbhoo observed in \cite{Purb08} the following. 

\begin{theorem}[\cite{Purb08}, see also \textsection 2.2 in \cite{Y12}]\label{sect2-thm2} Suppose that $F$ is a Laurent polynomial in $n$ variables such that $\dim_{\R^n}\Delta(F)=n$. For any $\varepsilon >0$, one can find $N_\varepsilon \in \N^*$ such that 
\begin{equation}\label{sect2-eq10} 
\forall N\geq N_\varepsilon,\quad 
{\rm dist} \big(\mathscr L_{F_N},\mathcal A_F\big)< \varepsilon. 
\end{equation} 
\end{theorem} 

\noindent 
This result can be quantified as follows. Let 
$$
c_F = \max\limits_{1\leq j\leq n} \big(\sup\limits_{\xi =(\xi_1,...,\xi_n)\in \Delta(F)} \xi_j  -  \inf_{\xi= (\xi_1,...,\xi_n) \in \Delta(F)} \xi_j\big).
$$ 
Let $d_F = \sup_{t\in \N} (E_{\Delta(F)}(t)/t^n)$, where $t\mapsto E_{\Delta(F)}(t)$ is the {\it Ehrhart polynomial} of the $n$-dimensional Newton polyhedron 
$\Delta(F)$ (see \cite{BtMc85,Beck05} for the definition and properties of the Ehrhart polynomial of a convex polyhedron such as $\Delta(F)$ and \cite{BtMc85} for an upper estimate of $d_F$). Then, if $x\in \R^n$ is such that 
${\rm dist}(x,\mathcal A_F)\geq \varepsilon>0$, then for any $N$ such that 
\begin{equation}\label{sect2-eq11} 
N \geq \frac{1}{\varepsilon} \Big( (n^2-1) \log N + \log \frac{16 c_F d_F}{3}\Big), 
\end{equation} 
the point $x$ cannot belong to the lopsided am\oe ba $\mathscr L_{F_N}$ \cite{Purb08}. 
\vskip 2mm
\noindent
One can complete the information given by Theorem \ref{sect2-thm2} and quantified as in \eqref{sect2-eq11} with the following companion proposition. 

\begin{proposition}[\cite{Purb08,Y12}, see also 
\textsection 3.1.5 in \cite{Y12}]\label{sect2-prop3}
Let $\alpha \in \Delta(F)\cap \Z^n$ such that a component $E_\alpha$ such that $\nu(E_\alpha)=\alpha$ exists in $\mathscr E_F$ and $x\in E_\alpha$ with 
$d(x,\mathcal A_F)\geq \varepsilon$. Then, for $N\geq N_\varepsilon$  
such that \eqref{sect2-eq11} holds, the leading term 
$|c_{\beta}|\, e^{\langle \beta,x\rangle}$ in the lopsided 
finite set $\{|c_{\alpha_N}|\, e^{\langle \alpha_N,x\rangle}\,;\, 
\alpha_N \in {\rm Supp} F_N\}$, where $F_N=\sum_{\alpha_N 
\in {\rm Supp} F_N} c_{\alpha_N} X^{\alpha_N}$, is such that 
$\beta = N^n\, \alpha$.      
\end{proposition}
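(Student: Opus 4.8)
The plan is to split the statement into two essentially independent facts: that the finite set really is lopsided (so that a unique leading term exists at all), and that the exponent of this leading term is forced to equal the multiplicity of the component $E_\alpha$ read off from $F_N$. The first fact I would obtain directly from the quantitative estimate already recorded: since $\mathcal A_{F_N}=\mathcal A_F$ and ${\rm dist}(x,\mathcal A_F)\ge\varepsilon$, hypothesis \eqref{sect2-eq11} together with Theorem \ref{sect2-thm2} guarantees that $x\notin\mathscr L_{F_N}$, i.e. the set $\{|c_{\alpha_N}|\,e^{\langle\alpha_N,x\rangle}\}$ is lopsided, with a unique dominant index which I call $\beta$. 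By the very definition of lopsidedness, $\beta\in{\rm Supp}\,F_N$.

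First I would establish the following self-contained principle: whenever $\{|c_\gamma|\,e^{\langle\gamma,x\rangle}\}_{\gamma\in{\rm Supp}\,G}$ is lopsided for a Laurent polynomial $G=\sum c_\gamma X^\gamma$, with dominant index $\gamma_0$, then $x\notin\mathcal A_G$ and the multiplicity of the component of $\R^n\setminus\mathcal A_G$ containing $x$ equals exactly $\gamma_0$. To see this I pick $z$ with ${\rm Log}\,z=x$ and factor the dominant monomial out of the loop \eqref{sect2-eq1}, writing $G(\zeta_1,\dots,\zeta_je^{i\theta_j},\dots,\zeta_n)=c_{\gamma_0}\,z^{\gamma_0}\,e^{i\gamma_{0,j}\theta_j}\bigl(1+r_j(\theta_j)\bigr)$. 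The lopsidedness bound depends only on the moduli, hence holds uniformly along the loop since $|\zeta_je^{i\theta_j}|=|\zeta_j|$, and it gives $|r_j(\theta_j)|<1$ for every $\theta_j$. Thus $1+r_j$ stays in the open disc centred at $1$ of radius $1$, contributing winding number $0$, while $e^{i\gamma_{0,j}\theta_j}$ contributes $\gamma_{0,j}$; the degree of the loop in $\theta_j$ is therefore $\gamma_{0,j}$, which is the $j$-th coordinate of the multiplicity. Applied to $G=F_N$, this identifies $\beta$ as the multiplicity $\nu_{E_\alpha}$ computed relative to $F_N$.

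Next I would compute that multiplicity. By additivity of the degree of a product of loops, the multiplicity of $E_\alpha$ relative to $F_N=\prod_{\varpi\in\F_N^n}F(\varpi_1X_1,\dots,\varpi_nX_n)$ is the sum over $\varpi\in\F_N^n$ of the degrees of the loops $\theta_j\mapsto F(\varpi_1\zeta_1,\dots,\varpi_j\zeta_je^{i\theta_j},\dots,\varpi_n\zeta_n)$. Each such loop is the loop \eqref{sect2-eq1} for $F$ based at the point $(\varpi_1\zeta_1,\dots,\varpi_n\zeta_n)$, which again lies over $x\in E_\alpha$ because $|\varpi_k|=1$; by the base-point independence of the multiplicity it has degree $\alpha_j$. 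Summing the $N^n$ equal contributions yields multiplicity $N^n\alpha$ for $E_\alpha$ relative to $F_N$. Combining this with the previous paragraph, and using that each component carries a single well-defined multiplicity, I conclude $\beta=N^n\alpha$ (and in particular $N^n\alpha\in{\rm Supp}\,F_N$, so the coefficient $c_{N^n\alpha}$ is nonzero).

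The step I expect to be the crux is the self-contained principle of the second paragraph, connecting lopsidedness to the winding number of the loop \eqref{sect2-eq1}. The delicate point there is that the perturbation $r_j(\theta_j)$ must be controlled uniformly in $\theta_j$; this works precisely because lopsidedness is a condition on $x={\rm Log}\,z$ alone and the circle $\theta_j\mapsto\zeta_je^{i\theta_j}$ does not alter any modulus, so the strict inequality $\sum_{\gamma\ne\gamma_0}|c_\gamma|\,e^{\langle\gamma,x\rangle}<|c_{\gamma_0}|\,e^{\langle\gamma_0,x\rangle}$ survives intact along the whole loop. Once this uniform bound is in hand, the remaining winding-number bookkeeping and the additivity computation are routine.
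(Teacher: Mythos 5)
Your argument is correct. Note that the paper itself offers no proof of Proposition \ref{sect2-prop3}: it is stated with a citation to \cite{Purb08} and \cite{Y12}, and your two-step route --- (i) lopsidedness of $\{|c_\gamma|e^{\langle\gamma,x\rangle}\}$ at $x$ forces, by the uniform bound $|r_j|<1$ on the whole fiber ${\rm Log}^{-1}(\{x\})$ and the resulting winding-number computation, that the dominant exponent equals the multiplicity vector of the component containing $x$; (ii) multiplicativity of degrees over the $N^n$ factors of $F_N$, each based at a point $\varpi\cdot\zeta$ still lying over $x\in E_\alpha$, gives that this multiplicity relative to $F_N$ is $N^n\alpha$ --- is precisely the argument of those references, so nothing needs to be changed.
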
 
               
\section{BIBO stability and am\oe bas} 
 
Let $S$ be a discrete $n$-linear time invariant system with transfer function the rational function 
\begin{equation}\label{sect3-eq1} 
\frac{B(X_1^{-1},...,X_n^{-1})}{A(X_1^{-1},...,X_n^{-1})} = X^\gamma \frac{G(X_1,...,X_n)}{F(X_1,...,X_n)}\in \C(X_1,...,X_n), 
\end{equation} 
where 
$\gamma \in \Z^n$ and $G,F\in \C[X_1,...,X_n]$ are coprime in $\C[X_1,...,X_n]$, both $F$ and $G$ being coprime with $X_1\cdots X_n$. 
\vskip 2mm
\noindent 
Our first observation is that the condition that $z\mapsto G(z)/F(z)$ is regular about 
$$
{\rm Log}^{-1}(\{\underbar 0\}) = 
\{z = (e^{i\theta_1},...,e^{i\theta_n})\,;\, \theta \in (\R/(2\pi\Z))^n\}
$$
is equivalent to the fact that its polar set $F^{-1}(\{0\})$ in $\T^n$ does not intersect ${\rm Log}^{-1}(\{\underbar 0\})$, which amounts to say that $\underbar 0 \in \R^n\setminus \mathcal A_F$.   
One can then state the following result.  

\begin{theorem}\label{sect3-thm1} 
Suppose that $\underline 0\in \R^n \setminus \mathcal A_F$, where 
$\dim_{\R^n} \Delta(F)=n$. 
A necessary and sufficient condition for a discrete $n$-rational filter $S$ with the rational function \eqref{sect3-eq1} as transfer function to be (strongly) {\rm BIBO} stable is that $\xi = {\boldsymbol 0} \in {\rm Supp}\, F$ and $x=\underline 0 \in E_{\boldsymbol 0}$. 
\end{theorem}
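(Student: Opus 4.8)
The plan is to convert BIBO stability, an $\ell^1$ condition on the impulse response, into a statement about absolute convergence on the unit torus ${\rm Log}^{-1}(\{\underline 0\})$ of the one Laurent development of the transfer function that encodes the realizable impulse response, and then to read that development off from the amoeba complement $\mathscr E_F$.

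First I would identify which Laurent series is the impulse response. With the convention $H(z)=\sum_\kappa h_\kappa z^{-\kappa}=z^\gamma G(z)/F(z)$, the realizable impulse response $(h_\kappa)$ is the coefficient sequence of the development of $z^\gamma G/F$ obtained by expanding at the origin, i.e.\ the series $\sum_\alpha \gamma_{E_{\boldsymbol 0},\alpha}\,z^\alpha$ whose exponents lie in the recession cone of the multiplicity-$\boldsymbol 0$ region; this is exactly the series attached, through the bijection between $\mathscr E_F$ and the Laurent developments of $1/F$, to the component $E_{\boldsymbol 0}$. Since $G$ is a polynomial and $G,F$ are coprime, the poles of $G/F$ are precisely $F^{-1}(\{0\})$ and $G$ contributes no singularity, so this origin development exists iff $z\mapsto z^\gamma G/F$ is holomorphic near $\underline 0$, that is iff $F(0)=c_{\boldsymbol 0}\neq 0$, i.e.\ iff $\boldsymbol 0\in{\rm Supp}\,F$; as ${\rm Supp}\,F\subset\N^n$ this is equivalent to $\boldsymbol 0$ being a vertex of $\Delta(F)$ and hence to the existence of $E_{\boldsymbol 0}$ in $\mathscr E_F$. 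This already forces the first half of the criterion and pins the relevant development to ${\rm Log}^{-1}(E_{\boldsymbol 0})$.

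Next I would prove $(h_\kappa)\in\ell^1\iff \underline 0\in E_{\boldsymbol 0}$, using that ${\rm Log}^{-1}(E_{\boldsymbol 0})$ is the maximal domain of convergence of this development. For $\underline 0\in E_{\boldsymbol 0}\Rightarrow\ell^1$: as $E_{\boldsymbol 0}$ is open and contains $\underline 0$, some compact polyannulus $K=\{z\,;\,\bigl|\log|z_j|\bigr|\le\varepsilon\ \forall j\}$ lies in ${\rm Log}^{-1}(E_{\boldsymbol 0})$, where the series converges absolutely and uniformly; since $\sup_{z\in K}|z^\alpha|=e^{\varepsilon|\alpha|_1}\ge 1$, one gets $\sum_\alpha|\gamma_{E_{\boldsymbol 0},\alpha}|\le\sum_\alpha|\gamma_{E_{\boldsymbol 0},\alpha}|\,e^{\varepsilon|\alpha|_1}<+\infty$, i.e.\ $(h_\kappa)\in\ell^1$. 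Conversely, $(h_\kappa)\in\ell^1$ gives $\sum_\alpha|\gamma_{E_{\boldsymbol 0},\alpha}|<+\infty$, so the series converges absolutely at every point of the unit torus; hence $\underline 0$ lies in the closure $\overline{E_{\boldsymbol 0}}$ of the domain of convergence. Because $\partial E_{\boldsymbol 0}\subset\mathcal A_F$ while $\underline 0\notin\mathcal A_F$ by hypothesis, $\underline 0$ cannot sit on $\partial E_{\boldsymbol 0}$, so $\underline 0\in E_{\boldsymbol 0}$.

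Finally I would settle necessity of $\boldsymbol 0\in{\rm Supp}\,F$: if $\boldsymbol 0\notin{\rm Supp}\,F$ then $E_{\boldsymbol 0}$ is absent from $\mathscr E_F$, the realizable (origin) development of $z^\gamma G/F$ does not exist, and no summable impulse response compatible with the prescribed transfer function is available, so stability fails; under both $\boldsymbol 0\in{\rm Supp}\,F$ and $\underline 0\in E_{\boldsymbol 0}$ the previous paragraph yields $(h_\kappa)\in\ell^1$, i.e.\ strong BIBO stability. The genuine crux is the first step: justifying that the physically realizable impulse response is exactly the development indexed by the multiplicity-$\boldsymbol 0$ component, i.e.\ matching the support/causality convention carried by $B(X^{-1})/A(X^{-1})=X^\gamma G/F$ with the recession cone of $E_{\boldsymbol 0}$ through the multiplicity map $\nu$. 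Once this dictionary is fixed, the analytic equivalence $\ell^1\iff\underline 0\in E_{\boldsymbol 0}$ and the combinatorial equivalence "$E_{\boldsymbol 0}$ exists" $\iff\boldsymbol 0\in{\rm Supp}\,F$ are comparatively routine.
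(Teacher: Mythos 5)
Your argument is correct in outline, but it takes a genuinely different route from the paper's. The paper does not go back to the $\ell^1$ definition at all: it invokes the Shanks-type equivalence stated in its introduction (under regularity of $G/F$ on the unit torus, BIBO stability $\Leftrightarrow$ $F\neq 0$ on $\overline{\D}^n$) as a black box, and then does pure convex geometry on the amoeba complement. Concretely, since $F$ is coprime with $X_1\cdots X_n$, non-vanishing of $F$ on $\overline{\D}^n$ is equivalent to the closed negative orthant $\boldsymbol{\Gamma}^-=\{x\,;\,x_j\leq 0\}$ lying in a single component $E_\alpha$; because $\boldsymbol{\Gamma}^-$ is $n$-dimensional, $\alpha$ must be a vertex of $\Delta(F)$, and ${\rm Supp}\,F\subset\N^n$ forces $\alpha=\boldsymbol 0$; conversely $\underline 0\in E_{\boldsymbol 0}$ plus the recession cone \eqref{sect2-eq1bis} containing $\boldsymbol{\Gamma}^-$ gives back non-vanishing on the whole punctured polydisc. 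You instead re-derive the analytic content of the Shanks criterion directly: you identify the impulse response with the Laurent development attached to $E_{\boldsymbol 0}$ and prove $\ell^1\Leftrightarrow\underline 0\in E_{\boldsymbol 0}$ by Abel-type estimates on a compact polyannulus (one direction) and by the convexity of the absolute-convergence set together with $\partial E_{\boldsymbol 0}\subset\mathcal A_F$ and the hypothesis $\underline 0\notin\mathcal A_F$ (the other). What your approach buys is self-containedness — you do not need to cite the polydisc criterion — at the cost of having to pin down the realizability dictionary (which Laurent development of $X^\gamma G/F$ \emph{is} the impulse response), a point you correctly flag as the crux but leave asserted rather than proved; to be fair, the paper avoids this issue only by delegating it to the cited Shanks criterion. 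What the paper's route buys is the stronger geometric conclusion $\boldsymbol{\Gamma}^-\subset E_{\boldsymbol 0}$, i.e.\ control of $F$ on the entire closed polydisc rather than only on a neighborhood of the distinguished boundary; your argument as written does not recover this, though it is not needed for the statement of the theorem. Both proofs share the combinatorial step that $\boldsymbol 0\in{\rm Supp}\,F$ is equivalent to $\boldsymbol 0$ being a vertex of $\Delta(F)$ and hence to the existence of $E_{\boldsymbol 0}$.
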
 

\begin{remark}[why {\it strong} BIBO stability ?] {\rm We speak here about {\it strong} BIBO stability (which is the usual notion as described up to now) in order to differentiate it weaker one that we will introduce next in Definition \ref{sect2-def2}.}
\end{remark}  

\begin{proof}
Suppose that $S$ is BIBO stable. Let 
$\overline \D^n = \{z\in \C^n\,;\, |z_1|\leq 1,...,|z_n|\leq 1\}$. 
Since 
$
\{z=(z_1,...,z_n)\in \T^n\,;\, 0<|z_1|\leq 1,...,0<|z_n|\leq 1\}
$
equals the Reinhardt domain ${\rm Log}^{-1} \big(\{x=(x_1,...,x_n)\in \R^n\,;\, 
x_1\leq 0,...,x_n\leq 0\}\big)$ and $F$ is coprime with $X_1\cdots X_n$, 
it is equivalent to say that $F$ does not vanish in $\overline \D^n$ 
and that the cone $\boldsymbol \Gamma^- := \{x=(x_1,...,x_n)\in \R^n\,;\, 
x_1\leq 0,...,x_n\leq 0\}$ lies entirely in some connected component 
$E= E_\alpha\in \mathscr E_F$, where $\alpha$ is necessarily a vertex of 
$\Delta(F)$ since the cone $\boldsymbol \Gamma^{-}$ is $n$-dimensional. 
It follows from the fact that ${\rm Supp}\, F \subset \N^n$ ($F$ is a polynomial in $X_1,...,X_n$) that the only possible vertex of $\Delta(F)$ for such a situation to occur is that $\alpha = \boldsymbol 0$, which implies that $\boldsymbol 0 \in {\rm Supp}\, F$. Since $\boldsymbol \Gamma^- \subset E_{\boldsymbol 0}$, one has in particular that 
$x=\underline 0 \in E_{\boldsymbol 0}$. Conversely, suppose $\boldsymbol 0 \in {\rm Supp}\, F$ and $\underline 0\in E_{\boldsymbol 0}$. The recession cone (see \eqref{sect2-eq1bis})
$$
\Gamma_{\boldsymbol 0}= \big\{x\in \R^n\,;\, \{\boldsymbol 0\}  
= \{\xi \in \Delta (F)\,;\, 
\langle \xi,x\rangle =\max_{u\in \Delta(F)} \langle u,x\rangle\}\big\} 
$$
of the unbounded connected component $E_{\boldsymbol 0}$ of 
$\R^n \setminus \mathcal A_F$ contains $\boldsymbol \Gamma^-$, as it is immediate to check. This implies that $z\mapsto 1/F(z)$ is holomorphic 
in the Reinhardt domain $\{z=(z_1,...,z_n)\in \T^n\,;\, 0<|z_j|<1\}$, which means that $F$ does not vanish there. Since $F$ is coprime with $X_1\cdots X_n$, $F$ cannot vanish on the union of the coordinate axis either, which implies that the $n$-rational filter $S$ is BIBO stable.        
\end{proof} 
\vskip 2mm
\noindent 
Theorem \ref{sect3-thm1} suggests to introduce two concepts related to the BIBO stability property. 

\begin{definition}[BIBO stability domain]\label{sect3-def1} 
Let $S$ be a discrete $n$-rational filter with a rational function \eqref{sect3-eq1} (with its properties, together with the condition $\dim_{\R^n} \Delta(F)=n$) as transfer function. If 
$\boldsymbol 0 \in {\rm Supp}\, F$, the connected component $E_{\boldsymbol 0}$ of $\R^n \setminus \mathcal A_F$ is called the 
{\rm BIBO stability domain} of the $n$-filter $S$. If 
$\boldsymbol 0 \notin {\rm Supp}\, F$, one decides that the {\rm BIBO} stability domain of $S$ is empty.        
\end{definition} 
 
\begin{definition}[BIBO weak stability]\label{sect2-def2}  Let $S$ be a discrete $n$-rational filter with a rational function \eqref{sect3-eq1} as in Definition \ref{sect3-def1}. The discrete $n$-rational filter $S$ is said to be {\rm BIBO weakly stable} if $\boldsymbol 0\in {\rm Supp}\, F$ and 
$\underline 0$ belongs to the topological boundary of the connected component $E_{\boldsymbol 0}$, which is part of the contour of $\mathcal A_F$. If it is the case, the component $E_{\boldsymbol 0}$ is called the {\rm weak BIBO stability domain} of the discrete $n$-rational filter $S$ ; otherwise the weak {\rm BIBO} domain of $S$ is considered as empty. 
\end{definition} 

\section{Algorithmic considerations} 

In this section, one considers a polynomial 
$F\in \Z[X_1,...,X_n]$ such that $\dim_{\R^n} \Delta(F) =n$, namely 
and $\boldsymbol 0\in {\rm Supp}\, F$, namely 
$$
F(X_1,...,X_n) = \sum\limits_{\alpha \in {\rm Supp}\, F
\subset \N^n} c_\alpha X^\alpha \quad (c_\alpha \in \Z^*,\ \boldsymbol 0 
\in {\rm Supp}\, F). 
$$   
All $F_N$ for $N\geq 1$ (in particular $N=2^k$ for $k\in \N$) remain in 
$\Z[X_1,...,Z_n]$.
\vskip 2mm
\noindent 
In order to state results from the algorithmic point of view, it is important to precise with which precision real or complex quantities are 
evaluated. Let us fix $\varepsilon_0>0$ as the threshold error. 
\vskip 2mm
\noindent 
Consider the assertion {\bf ($\bfA_{2^{-M_0}}$)}~: ``the distance of $\underline 0$ to 
$\R^n \setminus \mathcal A_F$ is at least equal to $2^{-M_0}$''. 
Then, we know from \eqref{sect2-eq11}, together with the precisions given by Proposition \ref{sect2-prop3}, that, as soon as 
$$
2^k \geq 2^{M_0}\, \Big( (n^2-1) k \log 2  + \log \frac{16 c_F d_F}{3}\Big),
$$ 
then, if 
$$
F_{2^k}(X) = \sum\limits_{\alpha_k \in {\rm Supp}\, F_{2^k}} 
c_{2^k,\alpha_k} X^{\alpha_k} \quad (c_{2^k,\alpha_k} \in \Z^*),  
$$
the set $\{|c_{2^k,\alpha_k}|\,;\, \alpha_k \in {\rm Supp}\, F_{2^k}\}\subset \N^*$ is lopsided, with leading term among the set $\{
|c_{2^k,2^{nk} \alpha}|\,;\, \alpha \in {\rm Im}\, \nu\}$. 
This is true as soon as $k\geq M_0 + \gamma_F$, where $\gamma_F$ is an a positive constant depending on $\Delta(F)$. Therefore, one can proceed algorithmically as follows in order (if possible) to validate the assertion {\bf ($\bfA_{2^{-M_0}}$)} :  
\begin{enumerate} 
\item compute the list of coefficients of the polynomial $F_{2^k}$ iteratively up to $k=M_0+\gamma_F$ (using the algorithmic procedure introduced in \cite{FMMW19}); 
\item  extract at each step $k$ the ${\rm card}\, ({\rm Im}\, \nu)$ strictly positive integer coefficients 
$c_{2^k,\alpha 2^k}$,  $\alpha \in {\rm Im}\, \nu$, from such list;  
\item test at each step $k$ whether one of the lobsided conditions 
\begin{equation}\label{sect4-eq1} 
|c_{2^k,\alpha 2^{kn}}| > \sum\limits_{\{\alpha_{k}\in {\rm Supp}\,F_{2^k}\,;\, \alpha_k\not= \alpha 2^{kn}\}} |c_{2^k,\alpha_k}|
\end{equation} 
is fulfilled (each such test being exact since $F_{2^k} \in \Z[X_1,...,X_n]$);  
\item if one of the above lopsided conditions is true, then the assertion 
{\bf ($\bfA_{2^{-M_0}}$)} is validated (observe that we also know then in which component $E_\alpha$ lies the point $\underline 0$) and one stops the procedure~; if not, the procedure goes on until $k=M_0+\gamma_F$. 
If it fails up to this point, it means that either 
$(0,0)\in \mathcal A_F$ or the threshold $2^{-M_0}$ is not sufficient to validate the assertion {\bf ($\bfA_{2^{-M_0}}$)}. 
\end{enumerate}
\vskip 1mm
\noindent 
If the above algorithmic procedure {\bf ($\bfA_{2^{-M_0}}$)} ends up with a validation, one can deduce a procedure to prove or disprove the assertion 
{\bf ($\bfA^0$)} : ``$\underline 0 \in E_{\boldsymbol 0}$''. 
One just need to analyze which $\alpha$ provides the lopsided 
inequality in \eqref{sect4-eq1} at step $3$. If it $\alpha=0$, then 
$(\bfA^0)$ is validated ; if it is $\alpha \not=0$, 
$(\bfA^0)$ is disproved. Thus, according to the fact that 
validation procedure for $(\bfA_{2^{-M_0}})$ concludes positively, 
we obtain in this way a test for BIBO stability of the discrete $n$-rational filter with transfer function such as \eqref{sect3-eq1}.    
\vskip 2mm
\noindent 
When $n=2$ and the validation procedure of {\bf ($\bfA_{2^{-M_0}}$)} fails, one can use \eqref{sect2-prop1} to prove or disprove exactly the assertion 
{\bf (B)}~: ``$\underline 0$ lies in the contour of $\mathcal A_F$''. 
Such procedure provides also the value of $\gamma_F(\zeta)\in \P^1(\R)$, where 
$\gamma_F$ denotes the logarithmic Gauss map introduced in \eqref{sect2-GaussMap} and $\zeta\in {\rm Log}^{-1}(\{0\}) = 
\{(e^{i\theta_1},e^{i\theta_2})\,;\, (\theta_1,\theta_2) 
\in (\R/(2\pi \Z))^2\}$ is such that $\gamma_F(\zeta) \in \P^1(\R)$. 
The value $\gamma_F(\zeta)$ stands for the normal complex direction to 
the smooth complex curve ${\rm log} ((F^{-1}(\{0\}))_{\rm reg})$ in the tubular domain $\R^2_x + i \R^2_\theta$ ($\log $ being here the 
multivalued function $z \mapsto {\rm Log} z + i 
({\rm arg} (z_1),{\rm arg}(z_2))$). The fact that $\gamma_F(\zeta) \in \P^1(\R)$ (in which case ${\rm Log} (\zeta) = (0,0)$ is in the contour of $\mathcal A_F$) can be interpreted as the fact that the direction $\gamma_F(\zeta)$ is ``horizontal'' in the  vertical strip 
$\R^2_x + i\R^2_\theta$. In case $(0,0)$ belongs to the boundary of $\mathcal A_F$ (which is a subset of the contour), such a direction $\gamma_F(\zeta)$ corresponds to a normal direction to the boundary of the am\oe ba $\mathcal A_F$ at the point $(0,0)$. 
\vskip 2mm
\noindent 
Let still $n=2$.  
Suppose assertion {\bf (B)} has been proved. Let now 
{\bf (C)} be the assertion: ``$\underline 0$ is a boundary point of $E_{\boldsymbol 0}$'' (that is the corresponding discrete rational $2$-filter is weakly BIBO stable in the sense of Definition \ref{sect2-def2}).   
The numerical procedures \texttt{RONKIN}, \texttt{AMIBE} (under 
\texttt{MATLAB}) and \texttt{ContourAmoeba} (under the environment of formal calculus \texttt{Sage}) proposed in \cite[\textsection 5]{Oss19} 
(see also \cite[Algorithm 3]{BKS16}) lead to a representation of 
$\mathcal A_F$ and its contour just by plotting the two-dimensional graph of $\Delta ([R_F])$. One cannot conclude from such routines to an algorithmic procedure from which one could validate the assertion ${\bf (C)}$ since $\Delta([R_F])$ is a distribution 
which is roughly numerically evaluated as a function. Nevertheless, the result of such algorithmic procedures (\texttt{RONKIN}, \texttt{AMIBE}, see \cite[\textsection 5]{Oss19}) 
allow to guess that $(0,0)$ is close to a point in the boundary of $E_{\boldsymbol 0}$ which is not a {\it branching point} for the contour of $\mathcal A_F$. If this is the case, the validation of 
assertion {\bf (B)} implies that of assertion {\bf (C)}. 
Note that disproving {\bf (B)} also disproves {\bf (C)} since the boundary of $\mathcal A_F$ is a subset of the contour of $\mathcal A_F$.          
We get in this way a test for weak BIBO stability when the test for BIBO stability {\bf ($\bfA^0$)} fails since {\bf ($\bfA_{2^{-M_0}}$)} fails.

\end{document}